\def\it{\textit} 
\def\bf{\textbf} 
\def\ul{\underline}
\def\mc{\mathcal}
\def\mb{\mathbf}
\def\bb{\mathbb}
\def\mr{\mathrm}
\def\Q{\mathrm{Q}}
\def\O{\mathrm{opt}}
\def\ds{\displaystyle}
\def\<_m{<_{\mathrm{m}}}
\newcommand{\be}{\begin{equation}}
\newcommand{\ee}{\end{equation}}
\newcommand{\benum}{\begin{enumerate}}
\newcommand{\eenum}{\end{enumerate}}
\newcommand{\bit}{\begin{itemize}}
\newcommand{\eit}{\end{itemize}}
\newtheorem{thom}{Theorem}
\newtheorem{lemma}[thom]{Lemma}
\newtheorem{rem}[thom]{Remark}
\newtheorem{prop}[thom]{Proposition}
\newtheorem{conj}[thom]{Conjecture}
\newtheorem{assum}[thom]{Assumption}
\newtheorem{prob}[thom]{Problem}
\newtheorem{defn}[thom]{Definition}
\begin{document}

\title{Lower Bounds on Syntactic Logic Expressions for Optimization
Problems and Duality using Lagrangian Dual to characterize optimality
conditions}

\author{Prabhu Manyem \\
Department of Mathematics \\
Shanghai University \\
Shanghai 200444, China. \\
Email: \texttt{prabhu.manyem@gmail.com}}

\maketitle

\begin{abstract}
We show that simple syntactic expressions such as existential second
order (ESO) universal Horn formulae can express NP-hard optimisation
problems.
There is a significant difference between the expressibilities of
decision problems and optimisation problems.
This is similar to the difference in computation times for the two
classes of problems; for example, a 2SAT Horn formula can be satisfied
in polynomial time, whereas the optimisation version in NP-hard.
It is known that all polynomially solvable decision problems can be
expressed as ESO universal ($\Pi_1$) Horn sentences in the presence of a
successor relation.
We show here that, on the other hand, if $P \neq NP$, optimisation
problems defy such a characterisation, by demonstrating that even a
$\Pi_0$ (quantifier free) Horn formula is unable to guarantee polynomial
time solvability.
Finally, by connecting concepts in optimisation duality with those in
descriptive complexity, we will show a method by which optimisation
problems can be solved by a single call to a ``decision" Turing machine,
as opposed to multiple calls using a classical binary search setting.
\end{abstract}

\bf{Keywords}.
optimization, duality, computational complexity, descriptive complexity,
decision problem, MaxHorn2Sat.

\thispagestyle{empty}

\section{Notation and Definitions}\label{sec:notation}

We first acquaint the reader with some background in Finite Model Theory
and Descriptive Complexity, and how they relate to optimisation.
For further reference, please see the books by Ebbinghaus and Flum
\cite{EF99} and Immerman \cite{immerman}.

\begin{defn}
\cite{cjtcs08}
A \bf{P-optimisation} problem $Q$ is a tuple
$Q = \{I_\Q, F_\Q, f_\Q, opt_\Q\}$, where
\begin{enumerate}
\item[(i)] $I_\Q$ is a set of instances to $\Q$,

\item [(ii)]
$F_\Q (I)$ is the set of feasible solutions to instance $I$,

\item[(iii)]
$f_\Q (I, S)$ is the {\em objective function} value to a solution $S \in
F_\Q(I)$ of an instance $I \in I_\Q$.
~It is a function 
$f:\bigcup_{I \in I_\Q}  [\{I\} \times F_\Q(I)] \rightarrow \bb{R}^+_0$ 
(non-negative
reals)\footnote{Of course, when it comes to computer representation,
rational numbers will be used.}, computable in time polynomial in the
size $|A|$ of the domain $A$ of $I$\footnote{Strictly speaking, we should use
$|I|$ here, where $|I|$ is the length of the representation of $I$.
~However, $|I|$ is polynomial in $|A|$, hence we can use  $|A|$.},

\item[(iv)]
For an instance $I \in I_\Q$, $opt_\Q (I)$ is either the minimum or
maximum possible value that can be obtained for the objective function,
taken over all feasible solutions in $F_\Q(I)$.

$\displaystyle
opt_\Q (I) = \max_{S \in F_\Q(I)} f_\Q (I, S)$ (for P-maximisation
problems),

$\displaystyle
opt_\Q (I) = \min_{S \in F_\Q(I)} f_\Q (I, S)$ (for P-minimisation
problems),

\item[(v)]
The following decision problem is in the class $\mb{P}$:
\it{Given an instance $I$ and a non-negative constant $k$, is there a
feasible solution $S \in F_\Q (I)$, such that 
$f_\Q (I, S) \ge k$ (for a P-maximisation problem), or 
$f_\Q (I, S) \le k$ (in the case of a P-minimisation problem)?}

And finally,

\item[(vi)]
An optimal solution $S_{\mr{opt}} (I)$ for a given instance $I$ can be
computed in time polynomial in $|I|$, where
$\displaystyle opt_\Q (I) = f_\Q (I, S_{\mr{opt}} (I))$.
% (LET ME LEAVE THIS POINT HERE FOR THE TIME BEING.)
\end{enumerate}

The set of all such $\mb{P}$-optimisation problems is
the $\mb{P_{opt}}$ class.
\label{def:pOptProblem}
\end{defn}

[Note: Some researchers dispute the presence of item (vi) above, whereas
some agree to its presence \cite{brentGuruMahajan}.
My argument is, for a P-optimisation problem, the optimal solution must
be computable in polynomial time.  Think of it as a function Turing
machine that completes all 3 tasks in polynomial time: reading the input,
computing the optimal solution, and writing the output (solution).]

A similar definition, for {NP-optimisation} problems, appeared in
Panconesi and Ranjan (1993) \cite{pancoRanjan}:
\begin{defn}
An \bf{NP-optimisation} problem is defined as follows.
Points \it{(i)-(iv)}
in Definition \ref{def:pOptProblem} above apply to NP-optimisation problems,
whereas \it{(vi)} does not.  Point \it{(v)} is modified as follows:

(v) The following decision problem is in $\mb{NP}$:
\it{Given an instance $I$ and a non-negative constant $k$, is there a feasible
solution $S \in F_\Q (I)$, such that $f_\Q (I, S) \ge k$ (for an
NP-maximisation problem), or $f_\Q (I, S) \le k$ (in the case of an
NP-minimisation problem) ?}

The set of all such $\mb{NP}$-optimisation problems is
the $\mb{NP_{opt}}$ class, and
$\mb{P_{opt}} \subseteq \mb{NP_{opt}}$.
\label{def:npOptProblem}
\end{defn}

\begin{defn}
\cite{KT94}
An optimisation problem $Q$ is said to
be  \bf{polynomially bound} if the value of an optimal solution
to every instance $I$ of $Q$ is bound by a polynomial
in the size of $I$. ~In other words, for every problem $Q$, there exists
a polynomial $p_\Q$, such that
\begin{equation}
opt_{Q} (I) \leq p_\Q (\vert I \vert),
\label{polyBoundDef}
\end{equation}
for {every} instance $I$ of $Q$.
~$\mb{P_{opt}^{pb}}$ ($\mb{NP_{opt}^{pb}}$) is the set of polynomially-bound
$\mb{P}$-optimisation ($\mb{NP}$-optimisation) problems.
Naturally,
$\mb{P_{opt}^{pb}} \subseteq \mb{P_{opt}}$ and
$\mb{NP_{opt}^{pb}} \subseteq \mb{NP_{opt}}$.
\label{def:polyMax}
\end{defn}

\begin{defn}
\cite{EF99}
\bf{First order (FO) logic} consists of a
{\em vocabulary} (alias signature) $\sigma$, and 
{\em structures} defined on the vocabulary.

In its simplest form, a \it{vocabulary} consists of a set of variables, a
set of constants, and a
set of relation symbols $R_j$ $(1 \leq j \leq J_1)$, each of arity $r_j$,
where $J_1$ is a finite positive integer.

A \it{structure} $M$ consists of a universe $U$ whose elements are the values
that variables can take.
$M$ also instantiates each relation symbol $R_j \in \sigma$ with tuples
from $U^{(r_j)}$.

When a structure $\mc{A}$ satisfies a formula $\phi$ (written as
$\mc{A} \models \phi$), $\mc{A}$ is said to {\em model} $\phi$, or,
$\mc{A}$ is a {\em model} for $\phi$. 
\label{def:FOL}
\end{defn}

For example, a vocabulary $\sigma_G$ in graph theory may comprise a set
of variables, two constants $s$ and $t$, and a single binary relation
symbol, $E$.

A structure $\mb{G}$ in graph theory may have the set of vertices $G =
\{1, 2, \cdots 10\}$ as its universe, in addition to the constants $s$
and $t$ (assuming that the graph has 12 vertices), and the relation $E$,
where $E(i,j)$ is true iff $(i,j)$ is an edge in the graph $\mb{G}$.
~The vertices $s$ and $t$ are two special vertices in $G$,
and may represent the origin and destination, respectively.

A structure represents an instance of an optimisation problem.  

We give a definition of FO formulae in their simplest form:
\begin{defn}
\cite{EF99}
\bf{FO formulae}.
\newline
(1) If $x_1$, $x_2$, $\cdots$, $x_k$ are variables or constants and $R$
is a $k$-ary relation symbol, then $R (x_1, x_2, \cdots, x_k)$ and
$x_i = x_j$ are formulae;
\newline
(2) If $\phi$ is a formula, then so is $\neg \phi$;
\newline
(3) If $\phi$ and $\psi$ are formulae, then so are $\phi \vee \psi$ and 
$\phi \wedge \psi$; and finally,
\newline
(4) If $\phi$ is a formula and $x$ is a FO variable, then
$\exists x \phi$ and $\forall x \phi$ are also formulae.
\label{def:FO_formula}
\end{defn}

\begin{defn}
\cite{papa}
We obtain \bf{second order logic} by augmenting first order logic with
second order (SO) variables. 
The SO variables are \it{relation symbols} $S_j$ ($1 \le j \le J_2$,
where $J_2$ is a finite positive integer),
defined over first order variables.
\label{def:SOL}
\end{defn}

As an example, if the underlying FO vocabulary is $\sigma_G$, then an SO
variable $P(x_1, x_2)$ can signify that a path exists from vertex $x_1$
to vertex $x_2$.

\begin{defn}
\cite{EF99}
For a formula to be in prenex normal form \bf{(PNF)},
all quantifiers appear at the beginning, followed by a
quantifier-free formula.
A \bf{$\mb{\Pi_1}$ \textnormal{(}$\mb{\Sigma_1}$\textnormal{)}
first order formula in PNF} only has
universal (existential) quantifiers that range over first order variables.
A \bf{$\mb{\Pi_2}$ \textnormal{(}$\mb{\Sigma_2}$\textnormal{)}
formula in PNF} is one that has the following form:
\begin{equation}
\phi  \equiv
\forall x_1 \cdots \forall x_a ~
\exists y_1 \cdots \exists y_b ~
\eta ~~~~ 
(\phi \equiv
\exists y_1 \cdots \exists y_b ~
\forall x_1 \cdots \forall x_a ~
\eta )
\label{Pi2Def}, 
\end{equation}
where $\eta$ is quantifier-free, 
the $x$'s and $y$'s are first-order variables and
$a, b \geq 1$.
\label{def:pie1pie2sigma1sigma2}
\end{defn}

The following definition is well known, going back to the 1950's; see for
example, \cite{gradel91}.
\begin{defn}
A \bf{Horn clause} is a disjunction of one or more literals, at most one
of which is positive. For example, $x$, $\neg y$ and
($\neg x \lor y$) are all Horn clauses, whereas ($x \lor y$) is not.
\label{def:hornClause}
\end{defn}

\begin{defn}
\cite{FI2008, immerman}
An \bf{existential second-order (ESO) Horn} expression is of the form
$\exists \mb{S} \psi$, where $\psi$ is a
first order formula, and $\mb{S} = (S_1, ~ \cdots ~ S_p)$ is a sequence
of predicate symbols not in the vocabulary of $\psi$.
The formula $\psi$ is written in $\Pi_1$ form as
\begin{equation}
\psi \equiv \forall x_1 \forall x_2 \cdots \forall x_k \eta \equiv \forall
\mb{x} ~ \eta,
\label{hornDef}
\end{equation}
where $\eta$ is a conjunction of Horn clauses ($\eta$ is, of
course, quantifier-free), and $x_i$ $(1 \leq i \leq k)$ are first
order variables.  Each clause in $\eta$ contains at most one positive
occurrence of any of the second order predicates $S_i$ ($1 \leq i \leq
p$).

A general \bf{ESO} formula is the same as an ESO Horn expression, except
that $\eta$ can now be any quantifier-free first order formula.
% $\hfill \rule{2.0mm}{2.0mm}$
\label{def:ESOhorn}
\end{defn}

\begin{defn}
\cite{EF99}
A \bf{successor relation} $succ (a,b)$, where $a \not= b$, denotes that \\
(i) $a$ immediately precedes $b$ (or $b$ immediately succeeds $a$) in $A$, 
where $A$ = universe of a structure \bf{A}, \\
(ii) $\forall c \in A$, where $a$, $b$ and $c$ are distinct,
$\neg succ (a,c) \wedge \neg succ (c,b)$, and \\
(iii) $\forall c \in A, \neg succ (c,c)$.

We assume that the vocabulary contains two constants, $min$ and $max$,
to represent the first and last elements in the universe respectively.
That \it{min} and \it{max} are the first and last elements respectively
can be expressed by the following two sentences:
\begin{equation}
\exists min ~ \forall x ~ \neg succ(x,min) ~~~\mbox{and} ~~~
\exists max ~ \forall x ~ \neg succ(max,x).
\end{equation}
\label{def:successor}
\end{defn}

Informally, $b$ occurs ``next" to $a$ in $A$, according to the above
definition --- $a$ and $b$ appear adjacent to each other in the input.  A
\it{successor relation} is different from a \it{linear order}
\cite{gradel92}.
A linear order (also known as a \it{total order}) is a binary relation
defining a sequence for every pair of elements in the domain.

\begin{defn}
(\cite{gradelAnd7others}, Chapter 3)
A first order sentence $\psi$ of vocabulary $\tau \cup \{<\}$ is said to
be \bf{order invariant} on a class $\mc{K}$ of $\tau$-structures, if its
truth on any structure in $\mc{K}$ does not depend on the choice of the
linear ordering $<$.
~That is, for any structure $\mc{U} \in \mc{K}$ and a pair of linear
orderings $<_1$  and $<_2$, it is true that
$(\mc{U}, <_1) \models \psi$ if and only if 
$(\mc{U}, <_2) \models \psi$.
\end{defn}
Henceforth, when we deal with ordered structures, we assume that the
sentences are order invariant. 

\begin{prob}
\bf{MaxHorn2Sat} \cite{jaumard87}.
\newline
{\em Given}.  A set of clauses $c_i$, $1 \leq i \leq n$.  Each clause
$c_i$ is one of the following:
(i) a Boolean variable $x_j$,
(ii) its negation,  $\neg x_j$,
(iii) $x_j \vee \neg x_k$, or
(iv) $\neg x_j \vee \neg x_k$.
\newline
{\em To Do}.  Assign truth values to the $x_i$'s such that the number
of satisfied clauses is maximised.
\label{prob:maxhorn2sat}
\end{prob}
Informally, an instance of MaxHorn2Sat consists of a formula in
conjunctive normal form (CNF), where each clause is Horn, and each
clause contains at most two literals.
(Such a formula is also known as a quadratic Horn formula.)
The problem is to maximise the number of satisfiable clauses.
The decision version of this problem is NP-complete \cite{jaumard87}.

\begin{defn}
\cite{papaYanna91, KT94}
\bf{MAX} $\mb{\Pi_0}$ is the class of maximisation problems whose
optimal solution value to an instance \bf{A} of a Problem $Q$ can be
represented as
\begin{equation}
opt_{\Q} (\mb{A}) = \max_\mb{S} \vert \{\mb{w}: (\mb{A},
\mb{S}, \mb{w}) \models  ~ \phi \}
\vert
\label{maxPi_0_Def}
\end{equation}
where $\phi = \alpha (\mb{w}, \mb{S})$ is a quantifier-free first-order
formula.
However, if $\phi$ is of the form $\ds \exists \mb{x}
~ \alpha(\mb{x}, \mb{w}, \mb{S})$ where $\alpha$ is quantifier-free,
such optimisation problems fall in the \bf{MAX NP} class, also called
the \bf{MAX} $\mb{\Sigma_1}$ class.
\label{def:maxPi0andMaxNP}
\end{defn}
See references \cite{KT94}, \cite{KT95} and \cite{papaYanna91} for an
explanation and hierarchy results for these classes.

% (Note to Referees: I am unsure if Definitions \ref{def:ESOhorn} and
% \ref{def:maxPi0andMaxNP} are necessary for a Logic journal; they seem more
% appropriate for an Algorithms or Complexity journal?)

\begin{table}
{\begin{tabular}{|l|p{330pt}|}
\hline
ESO logic &  Existential second order logic.  \\
\hline
FO logic &  First order logic.             \\
\hline
$\bf{A}  $  & a
structure defined over a signature $\sigma$;
           \bf{A} captures an instance of an optimisation problem. \\
\hline
 $\eta  $    & a quantifier-free first order (FO) formula, which is a
conjunction of \it{Horn} clauses.
(Recall that a Horn clause contains at most one positive literal.)  \\
\hline
$\mb{x}  $  &  an $m-$tuple of FO variables. \\
\hline
$\mb{S}  $  &  a sequence of predicate symbols or second order (SO) variables;\\
            &  \bf{S} captures a solution to the optimisation problem.
            \\
\hline
 $\mb{P_{opt}}$ ($\mb{NP_{opt}}$)    &
       $\mb{P}$-optimisation
       ($\mb{NP}$-optimisation) problems.  See Definition
             \ref{def:pOptProblem} (\ref{def:npOptProblem}). \\
\hline
       &    \\ [-10pt]
 $\mb{P_{opt}^{pb}}$ ($\mb{NP_{opt}^{pb}}$)    &
       Polynomially bound $\mb{P}$-optimisation
       ($\mb{NP}$-optimisation) problems.  See Definition
             \ref{def:polyMax}. \\
\hline
PNF      &  Prenex Normal Form.             \\
\hline
Sentence &  A logic formula in which every variable that appears is
            quantified. \\
\hline
\end{tabular}}
\caption{Notation}
\label{tab:defines}
\end{table}

\section{Literature Review and Our Contributions}

In a recent paper, we \cite{cjtcs08} proved the following:
\begin{thom}
Let $\sigma$ be a signature which contains a successor relation.
Let $Q$ be an optimisation problem, with finite structures $\mb{A}$ (over
$\sigma$) as instances to $Q$.
~If $Q \in \mb{P_{opt}^{pb}}$, then the value of an optimal solution to
an instance $\mb{A}$ of $Q$ can be represented by
\begin{equation}
\O_{\Q} (\mb{A}) = \O_\mb{S} \vert \{\mb{w}: (\mb{A},
\mb{S}, \mb{w}) \models \forall \mb{x} ~ \eta(\mb{w}, \mb{x}, \mb{S}) \}
\vert
\label{eq:maxDef}
\end{equation}
where $\mb{x}$, $\mb{A}$, $\mb{S}$ and $\eta$ are defined in
Table \ref{tab:defines}, and $\O \in \{\max, \min\}$.
The Horn condition in the formula $\eta$ applies only to the second
order predicates in \bf{S}, not to first order predicates.
\label{thom:hornMax}
$\hfill \rule{2.0mm}{2.0mm}$
\end{thom}

The converse of Theorem \ref{thom:hornMax} can be stated as:
\begin{prop}
\it{If the optimal solution value to an optimisation
problem $Q$ can be represented as in (\ref{eq:maxDef}), then
$Q$ belongs to the class $\mb{P_{opt}^{pb}}$}.
\label{prop:converseProp}
$\hfill \rule{2.0mm}{2.0mm}$
\end{prop}

Proposition \ref{prop:converseProp} (which deals with maximisation and
minimisation) has been shown to be false by Gate and Stewart
\cite{gateStewart} (Theorem \ref{thom:gateStewart} of this paper).
The maximisation part can be cast as in Problem \ref{prob:converseDef}
below.

For this problem, we find it more convenient to use a \it{new framework}
which was first described in \cite{KT95}.
It slightly differs from the older framework in \cite{KT94} (and used in
Theorem \ref{thom:hornMax}); the tuples \bf{w} that count towards the
objective function are now part of a new second order predicate, $S_0$.
The connection between the two frameworks is as follows:
$\ds \mb{w} \in S_0$ in the new framework iff 
$\ds (\mb{A}, \mb{S}, \mb{w}) \models \forall \mb{x} ~ \eta(\mb{w},
\mb{x}, \mb{S})$ in (\ref{eq:maxDef}).

\begin{prob}
\bf{Syntactic maximisation w.r.t a universal Horn F.O. formula} $\mb{\phi}$
\cite{FI2008}.
% $\mb{\Pi_n}$ ($\mb{\Sigma_n}$).
\newline
{\em \ul{Given}}.  (i) A structure $\mb{A}$, over
an appropriate signature $\sigma$ which contains a successor relation;
\newline
(ii) a sequence of second order variables $\mb{S} = \{S_1, \cdots,
S_p\}$ where each $S_i$ is of arity $r_i$ ($1 \le i \le p$);
\newline
(iii)
a tuple \bf{w} = ($w_1$, $w_2$, $\cdots$, $w_{r_0}$) of first order
variables of arity $r_0$;
and
\newline
(iii)
a first order universal Horn formula $\phi(\mb{w}, \mb{A}, \mb{S})$.
% a ${\Pi_n}$ (${\Sigma_n}$) formula $\phi$.

{\em \ul{To Do}}.
For $0 \le i \le p$, assign truth values to each $S_i$
such that $\phi$ is satisfied and 
$\vert \{\mb{w}: (\mb{w}, \mb{A}, \mb{S}) \models ~ \phi \} \vert$ 
is maximised.
\label{prob:converseDef}
$\hfill \rule{2.0mm}{2.0mm}$
\end{prob}

In other words, the goal is to maximise the number of tuples \bf{w} that
satisfy $\phi$;
that is, to achieve the maximum value for $opt_{} (\mb{A})$:
\begin{equation}
\O_{} (\mb{A}) = \max_{\mb{S}}
\vert \{\mb{w} \in A^{r_0}: (\mb{w}, \mb{A}, \mb{S}) \models ~ \phi \} \vert.
\label{syntMax}
\end{equation}
~If $A$ is the domain of \bf{A}, then
$S_i \subseteq A^{r_i}$, $1 \le i \le p$.

Each $S_i$ above is of the form $S_i (z_1, z_2, \cdots, z_{r_i})$,
where each $z_j$ can take any value in the domain of \bf{A}.
~First order variables are those that can be assigned values from the
domain of \bf{A}.
~A first order formula is one that contains no second-order (SO) variables.
The SO variables are quantified by SO quantifiers in an ESO formula.

\ul{Example}: Let the domain of a structure \bf{A} be 
\{$a$, $b$, $c$\}.
Let the arity of the SO variable $S_1$ be two.
Thus the nine possible tuples of $S_1$ are
($a$, $a$), ($a$, $b$), ($b$, $a$), ($b$, $b$), ($a$, $c$), ($c$, $a$),
($c$, $c$), ($b$, $c$), and ($c$, $b$).
The task is to assign truth values to each of these nine tuples of $S_1$;
and similarly for the other SO variables $S_2$, $\cdots$, $S_p$, such that 
the cardinality of the set
 $\{\mb{w} \in A^{r_0}: (\mb{w}, \mb{A}, \mb{S}) \models ~ \phi \}$ is 
maximised.

\begin{prob}
\bf{The decision version of Problem \ref{prob:converseDef}}.
% $\mb{\Pi_n}$ ($\mb{\Sigma_n}$).
\newline
{\em \ul{Given}}.  (i), (ii), (iii): Same as in Problem
\ref{prob:converseDef}; and
\newline
(iv) a constant $K$ which is a positive integer.

{\em \ul{To Do}}.
For $1 \le i \le p$, assign truth values to each $S_i$
such that $\phi$ is satisfied and 
$\vert \{\mb{w} \in A^{r_0}: (\mb{w}, \mb{A}, \mb{S}) \models ~ \phi \}
\vert \ge K$.
\label{prob:converseDefDecision}
\end{prob}

When it comes to Turing machine input, $K$ will be encoded in binary, as
usual.
But the more important question is, how is $K$ presented? Is it part of
the first order structure?
$K$ being a constant in the signature doesn't make sense, as it will be
the same for all instances.
A different $K$ should mean a different instance.
Hence it has to be a part of the domain (universe).
We can achieve this by letting the domain have different parts/sections.
For example, for a minimum spanning tree problem, the domain of the input
structure will consist of three parts: (i) the vertices, (ii) the edge
weights, and (iii) the bound $K$ on the objective function.
To our knowledge, this issue (bound on the objective function) has not been
addressed in the Descriptive Complexity literature so far.

The universal Horn formula $\phi$ in (\ref{syntMax}) can be written as
$\phi \equiv \forall \mb{x} ~ \eta(\mb{w}, \mb{x}, \mb{S})$ where $\eta$
is a quantifier-free conjunction of Horn clauses as in (\ref{eq:maxDef}).
Then for $1 \le i \le p$, the problem is to assign truth values to each
$S_i$, such that the number of tuples $\mb{w}$ that satisfy $\forall
\mb{x} ~ \eta(\mb{w}, \mb{x}, \mb{S})$ is maximised.
(As in Theorem \ref{thom:hornMax}, the Horn condition in the formula
$\eta$ applies only to the second order predicates in \bf{S}.
~This is because, the FO predicates are part of the input and hence their
truth values can be substituted, whereas the SO predicates are the
unknowns.)

Due to difficulties in computing the optimal solution value for a
general maximisation problem in $\mb{P_{opt}^{pb}}$, Bueno and Manyem
\cite{FI2008} made the following conjecture:
\begin{conj}
The optimal value for an instance \bf{A} of a maximisation problem, as
measured in (\ref{syntMax}), cannot be computed in polynomial time
by a deterministic Turing machine using syntactic (logic based) techniques.
We need optimisation algorithms that exploit the particular problem
structure.
\label{syntacticImpossible}
\end{conj}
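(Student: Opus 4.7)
The plan is to establish Conjecture \ref{syntacticImpossible} by reducing a known NP-hard optimization problem---Maximum Independent Set, for concreteness---to Problem \ref{converseDef}. Any deterministic polynomial-time \emph{syntactic} procedure that, given $(\mb{A}, \eta)$, returns the right-hand side of (\ref{maxDef}) would thereby compute the independence number of an arbitrary graph in polynomial time, forcing $\mr{P} = \mr{NP}$.

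First, fix the signature $\sigma = \{E\}$ with $E$ binary. From a graph $G = (V, E_G)$, build the structure $\mb{A}$ with universe $V$ and $E^{\mb{A}} = E_G$. Use a single unary second-order variable $S$ and set $\mb{w} = (w)$, $\mb{x} = (u,v)$. Define
\[
\eta(w, u, v, S) \ := \ S(w) \ \wedge \ \bigl(\neg E(u,v) \vee \neg S(u) \vee \neg S(v)\bigr).
\]
Each conjunct is Horn in $S$: the first has one positive $S$-literal, the second has none, and the first-order literal $E(u,v)$ is unrestricted by the stated Horn condition. Thus $(\mb{A}, S, \eta)$ is a legitimate instance of Problem \ref{converseDef}.

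The semantic check is short. If $S \subseteq V$ fails to be an independent set of $G$, some edge $(u,v)$ witnesses the failure of the second conjunct, so $\forall u\, \forall v\, \eta(w,u,v,S)$ is false for every $w$ and the count of satisfying $w$ is zero. If $S$ is independent, the second conjunct is vacuously true and $\forall u\, \forall v\, \eta$ reduces to $S(w)$, giving $|S|$ satisfying $w$. Maximising over $S$ therefore yields $\alpha(G)$, the independence number of $G$. A polynomial-time syntactic routine on $(\mb{A}, \eta)$ would accordingly decide Maximum Independent Set in polynomial time.

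The main obstacle will be fixing a formal definition of ``syntactic technique'' sharp enough to make the impossibility unconditional (modulo $\mr{P} \ne \mr{NP}$); the natural choice is a deterministic Turing machine whose only inputs are the encodings of $\mb{A}$ and of $\eta$, with no access to any problem-specific oracle or hint. A secondary delicacy is reconciling the reduction with Proposition \ref{converseProp}: that proposition asserts, for each representation, the \emph{existence} of a polynomial-time algorithm, whereas the conjecture denies the existence of any \emph{single} polynomial-time procedure that works uniformly across all Horn formulas $\eta$. The reduction above rules out only the latter, which is exactly what the conjecture requires.
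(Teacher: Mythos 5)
Your reduction is correct, and it sits at essentially the same logical level as what the paper actually establishes: the statement is only a conjecture, and both you and the paper offer \emph{conditional} evidence (modulo $\mathrm{P}\ne\mathrm{NP}$) by exhibiting an NP-hard optimization problem whose optimum is expressible in the syntactic form, so that any uniform polynomial-time syntactic evaluator would collapse $\mathrm{P}$ and $\mathrm{NP}$. The routes differ in the source problem and, more importantly, in the strength of the endpoint. You encode Maximum Independent Set with the two-clause Horn formula $S(w) \wedge (\neg E(u,v) \vee \neg S(u) \vee \neg S(v))$; this is clean, the Horn check is immediate, and the semantic verification is transparent, but the formula genuinely needs the universal quantifiers over $(u,v)$, so you only show that the $\Pi_1$ Horn form of (\ref{maxDef}) captures NP-hard problems --- the level the paper already attributes to Gate and Stewart's reduction from MaxHorn2Sat. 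The paper's own contribution (Theorem \ref{thom:maxhorn2sat} via Lemma \ref{lem:128clauses}) is strictly stronger: it produces a \emph{quantifier-free} ($\Pi_0$) Horn expression for MaxHorn2Sat, at the cost of a DNF-to-CNF conversion and a case analysis showing that all 128 resulting clauses are Horn. So your argument buys simplicity, while the paper's buys the sharper lower bound that even eliminating the first-order quantifiers does not restore tractability. Your closing caveats are apt: ``syntactic technique'' is never formalized, so neither argument proves the conjecture unconditionally. One refinement: your reduction in fact refutes Proposition \ref{converseProp} itself (modulo $\mathrm{P}\ne\mathrm{NP}$), not merely its uniform reading, since Maximum Independent Set is polynomially bound yet not known to admit polynomial-time computation of optima; the distinction you draw between per-representation and uniform algorithms is therefore not needed to dissolve the tension --- the paper itself concedes that representability in the form (\ref{maxDef}) does not imply membership in $\mathbf{P_{opt}^{pb}}$.
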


For problems in the $\mb{NP_{opt}^{pb}}$ class, Kolaitis and Thakur
\cite{KT94} gave a precise characterisation (an ``if and only if"
result).  For problems in the $\mb{P_{opt}^{pb}}$ class, Conjecture
\ref{syntacticImpossible} predicts that Proposition \ref{prop:converseProp}
is false, and hence the ``partial characterisation" in Theorem
\ref{thom:hornMax} is one-way.

Gate and Stewart \cite{gateStewart} settled Conjecture
\ref{syntacticImpossible} with a Yes answer.
The decision version of MaxHorn2Sat (see the definition in Problem
\ref{prob:maxhorn2sat}) is known to be NP-complete \cite{jaumard87}.
Gate and Stewart were able to show a polynomial time reduction from
the decision version of MaxHorn2Sat to Problem
\ref{prob:converseDefDecision}, thus proving that 
\begin{thom}
Problem \ref{prob:converseDefDecision} is NP-hard.
\label{thom:gateStewart}
$\hfill \rule{2.0mm}{2.0mm}$
\end{thom}
\bf{Corollary}.
 Proposition \ref{prop:converseProp} is false.

In other words, the authors in \cite{gateStewart} essentially showed
that just because the optimal solution value to an optimisation problem
can be expressed in the form in (\ref{syntMax}) does not necessarily mean
that the problem is polynomially solvable; it may be NP-hard.

\subsection{Our contribution}
Here we prove a stronger negative result.  Notice that the first order
part in (\ref{syntMax}) is in $\Pi_1$ Horn form (universal Horn).  One
would expect that if we simplify the expression from $\Pi_1$ Horn to
$\Pi_0$ Horn (that is, a quantifier-free Horn formula), we should be
able to guarantee polynomial time solvability.

Unfortunately this is not the case.  We will show below that even a
quantifier-free Horn expression is unable to guarantee polynomial time
solvability.  We show this by exhibiting such an expression for an
NP-hard problem, MaxHorn2Sat.

\it{Difference between decision problems and optimisation problems}.
It is well known that if a decision problem can be
expressed as a universal ($\Pi_1$) Horn sentence in existential
second-order (ESO) logic, the problem is polynomially solvable
(see Theorem 9.32 in \cite{immerman}).
However, as we have stated above, optimisation problems differ
significantly from decision problems in their behaviour.

In Section \ref{sec:rescue}, we use optimisation duality (using the
Lagrangian Dual) to characterise optimality conditions; that section
also describes the conditions under which a single call to a ``decision
machine" (a Turing machine that solves decision problems) can obtain
optimal solutions, rather than using multiple calls to a decision machine
in a classical binary search setting.

\section{A Syntactic Expression for MaxHorn2Sat}

In this section, we will show below that when it comes to maximisation,
quantifier-free Horn expressions are unable to guarantee polynomial time
solvability.
Or, looking at this in a positive sense, quantifier-free Horn
expressions are also able to express NP-hard maximisation problems.
We show this by exhibiting such an expression for an NP-hard problem,
MaxHorn2Sat.

We need instances at two different levels.  Let us make this more clear
with an example.
Suppose we are given a MaxHorn2Sat instance (formula) such as \bf{M}
$\equiv (z_1 \vee \neg z_2) \wedge (z_3) \wedge (\neg z_3 \vee \neg
z_1)$.

The variables in this instance are $Z = \{z_1, z_2, z_3\}$, and a
structure \bf{B} maps $Z$ to its
universe $V$ = \{TRUE, FALSE\}.

However, to represent the MaxHorn2Sat instance \bf{M} as in
(\ref{eq:maxDef}) (or as in Theorem \ref{thom:maxhorn2sat} below), the
variables used will be
$X = \{x, y\}$, and the universe of the structure \bf{A} would be $Z$.
~Diagrammatically,
\begin{equation}
X = \{x, y\} ~\longrightarrow~ Z = \{z_1, z_2, z_3\} ~\longrightarrow~
V = \{TRUE, FALSE\}.
\end{equation}
\bf{A} maps (instantiates) $X$ to $Z$, and
\bf{B} maps (instantiates) $Z$ to $V$.

The second order variables $\mb{S}$ (to be used with \bf{A}) consists of
a single unary predicate $S$, that is $\mb{S} = \{S\}$ where $S$ is of
arity one.
$S$ can be considered as a guess of the map \bf{B}.
~In the above example, for a certain MaxHorn2Sat input clause, if
$\mb{A}(x) = z_1$,
$\mb{A}(y) = z_3$,
$S(x) = FALSE$ and
$S(y) = TRUE$,
then $S$ would have guessed that
$\mb{B}(z_1) = FALSE$ and
$\mb{B}(z_3) = TRUE$.

\subsection{The signature of A}\label{sec:sigA}

Henceforth, we shall work with the extended structure (\bf{A}, $S$),
where \bf{A} is a relational structure (the input).
$S$ is the only second order predicate, representing the output to the
optimisation problem.
$S$ is unary (that is, its arity is one);
it assigns true/false values to first order variables.

If variables $x$ and $y$ appear in a 2-literal MaxHorn2Sat clause, then
the clause can assume one of the following forms (and represented in the
signature of \bf{A} by the corresponding first order predicate on the
right):

\begin{center}
\begin{tabular}{|l|l|}
\hline
$\neg x \vee \neg y$ & $BothNeg (x,y)$ \\
\hline
$\neg x \vee y$ & $FirstNegSecondPos(x,y)$, or simply $FNSP(x,y)$ \\
\hline
$x \vee \neg y$ & $FirstPosSecondNeg(x,y)$, or simply $FPSN(x,y)$ \\
\hline
\end{tabular}
\end{center}

If a clause contains only one literal, insert a second literal and set
it to FALSE (explained in Sec. \ref{sec:oneLitClause}). 
We need two more predicates in the first order vocabulary:
$OnePos(x)$ and $OneNeg(x)$, depending on whether the literal is positive
or negative 
(explained in Sec. \ref{sec:oneLitClause}).

Hence the signature of \bf{A} consists of the following first order
predicates (all are binary):
$FNSP$, $FPSN$, $BothNeg$, $OnePos$ and $OneNeg$.

We need a few constants: $1 \le i \le 5$, or $I$ = \{1, 2, 3, 4, 5\}.
These will be used to indicate the type of clause
(explained in Sec. \ref{sec:clauseType}).  We also need a constant called
$NULL$, explained in Sec. \ref{sec:oneLitClause}.

The universe $U$ is the set of variables in the given MaxHorn2SAT
instance.  In the example above, $U$ = \{$z_1$, $z_2$, $z_3$\}.

\subsection{Counting satisfying clauses}

We make the following assumptions:
\begin{assum}
\begin{enumerate}
\item
A clause such as ($x \lor x$) is simplified to ($x$);

\item
Clauses such as ($x \lor \neg x$) are ignored;

\item
Assume that the list of variables is ordered.  For example, we can assume
that the variables have a certain sequence $x_1$ $<$ $x_2$ $<$ $\cdots$
$<$ $x_{n-1}$ $<$ $x_n$;

\item
In a two-variable clause consisting of different variables $x_i$ and
$x_j$, assume that $i < j$;

\item
We disallow duplication of clauses.
For example, if there are two equivalent clauses such as ($\neg x_i \vee
x_j$) and ($x_j \vee \neg x_i$) in the MaxHorn2Sat instance, where $i <
j$, we eliminate the clause $x_j \vee \neg x_i$, as per the previous
assumption;

\item
For any $(i,j)$ pair with $i < j$, distinct clauses such as ($x_i \lor
\neg x_j$) and ($\neg x_i \lor x_j$) \it{can} occur in the same
MaxHorn2Sat formula.
For the former clause, $FNSP(x_i, x_j)$ is true, and $FPSN(x_i, x_j)$ is
true for the latter.
\end{enumerate}
\label{assum:clauseTypes}
\end{assum}

Our approach is similar to that of Kolaitis-Thakur 1994 \cite{KT94},
where they provide an expression for the optimal value for Max3Sat
(optimisation version).

We only count satisfying MaxHorn2Sat clauses for the objective function.
That is, we count the number of tuples $(x, y, i)$ that satisfy $\phi
\land \tau \land \gamma$,
where $1 \le i \le 5$, and 
\begin{equation}
\phi \land \tau \land \gamma \equiv
\left(\bigvee_{i=1}^5 \phi_i \right) \land \tau \land \gamma.
\label{thePhees}
\end{equation}
The $\phi_i$'s are described below; $\tau$ and $\gamma$ are explained in
Sec. \ref{sec:clauseType}.

\subsection{Two-literal MaxHorn2Sat clauses}

Two-literal MaxHorn2Sat clauses can be satisfied in one of the following
ways:

$\ds \phi_1 \equiv FPSN(x,y) \wedge [S(x) \vee \neg S(y)]$.

$\ds \phi_2 \equiv FNSP(x,y) \wedge [\neg S(x) \vee S(y)]$.

$\ds \phi_3 \equiv BothNeg (x,y) \wedge [\neg S(x) \vee \neg S(y)]$.

\subsection{One-literal MaxHorn2Sat clauses}\label{sec:oneLitClause}

As mentioned earlier, convert one-literal clauses to two-literal
clauses.  (We do this, so that we can simply count the number of $(x,y,i)$
tuples that satisfy $\phi \land \tau \land \gamma$.)

If the literal is positive, then create a predicate called $OnePos(x)$,
create a constant called $NULL$,
and set the second literal $y$ to $NULL$, as if the clause is $x \vee y$;
The clause is true iff $x$ is true.

$\ds \phi_4 \equiv OnePos (x) \land (y = NULL) \land S(x)$.

Similarly if the literal is negative, then create $OneNeg(x)$:

$\ds \phi_5 \equiv OneNeg (x) \land (y = NULL) \land \neg S(x)$.

\subsection{The complete DNF formula}

The first two atoms in the definitions of $\phi_4$ and $\phi_5$ are first
order (known from the input).  Hence they can be combined into $D$ and
$E$ respectively, as below.

For convenience of writing, let us substitute

$A = FPSN(x, y)$, ~ 
$B = FNSP(x, y)$, ~ 
$C = BothNeg (x,y)$,

$D = OnePos (x) \land (y = NULL)$, ~ 
$E = OneNeg (x) \land (y = NULL)$,

$P = S(x)$, ~ 
$Q = S(y)$.

Then we can rewrite $\phi_i$ ($1 \le i \le 5$) as
\begin{equation}
\begin{array}{ll}
\phi_1 \equiv (A \wedge P) \vee (A \wedge \neg Q), &
\phi_2 \equiv (B \wedge \neg P) \vee (B \wedge Q), \\ [1mm]
\phi_3 \equiv (C \wedge \neg P) \vee (C \wedge \neg Q), ~ &
\phi_4 \equiv (D \wedge P), \\ [1mm]
\phi_5 = (E \wedge \neg P).
\end{array}
\label{phiDefine0}
\end{equation}

From (\ref{thePhees}), since one of the $\phi_i$'s should be satisfied
for a MaxHorn2Sat clause to be counted towards the objective function,
\begin{equation}
\begin{array}{rcl}
\phi & \equiv & \bigvee_{i=1}^5 ~ \phi_i \\ [1mm]
     & \equiv & (A \wedge P) \vee (A \wedge \neg Q) \vee
       (B \wedge \neg P) \vee (B \wedge Q) \\ [1mm]
    & & \vee (C \wedge \neg P) \vee (C \wedge \neg Q) 
         \vee (D \wedge P) \vee (E \wedge \neg P).
\end{array}
\label{phiDefine}
\end{equation}

Write $\phi \equiv k_1 \vee k_2 \vee \cdots \vee k_7 \vee k_8$,
corresponding to each of the 8 conjunct clauses above in $\phi$.

That is, $k_1 \equiv A \wedge P$, $k_2 \equiv A \wedge \neg Q$, $\cdots$, 
$k_7 \equiv D \wedge P$, and $k_8 \equiv E \wedge \neg P$.

\subsection{Converting DNF to CNF}

Now $\phi$ is in DNF, so we should convert it to CNF. ~Call the CNF form
as $\psi$ (or $\psi (x, y, S)$, to be more accurate).
There are 8 clauses in $\phi$ with 2 literals each, so $\psi$ will have
$2^8$ = 256 clauses\footnote{256 may be ``large", but still a finite number.},
 with 8 literals each --- one literal from each of the
8 clauses in $\phi$.
From (\ref{phiDefine}), we can write $\psi$ in lexicographic order as
\begin{equation}
\begin{array}{rcl}
\psi & \equiv & 
  (A \vee A \vee B \vee B \vee C \vee C \vee D \vee E) \wedge \cdots \\ [1mm]
& & \wedge (P \vee \neg Q \vee \neg P \vee Q \vee \neg P \vee \neg Q \vee P
\vee \neg P).
\end{array}
\label{psiDefine}
\end{equation}

We should ensure that each of the 256 disjunct clauses in $\psi$ is
Horn, which is what we do next.

\begin{lemma}
Each of the 256 clauses in $\psi$ is Horn.
\label{lem:128clauses}
\end{lemma}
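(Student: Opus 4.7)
The plan is to prove the lemma by directly inspecting the 128 CNF clauses produced by distributing the seven DNF conjuncts $k_1,\dots,k_7$ of $\phi$ in (\ref{phiDefine}) and verifying that each one contains at most one positive occurrence of the second-order predicate $S$.

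The key structural observation is that positive $S$-literals enter the clauses of $\psi$ from only three of the seven conjuncts: $k_1 = A\wedge P$ contributes the positive literal $P = S(x)$, $k_7 = D\wedge P$ also contributes $P = S(x)$, and $k_4 = B\wedge Q$ contributes $Q = S(y)$. The other four conjuncts $k_2, k_3, k_5, k_6$ offer only first-order literals ($A, B, C$) or negative $S$-literals ($\neg P, \neg Q$), and so can never raise the positive-$S$ count. Because each CNF clause selects exactly one literal per conjunct, the positive-$S$ content of a clause is determined entirely by the choices made from $k_1, k_4, k_7$; the ostensible 128-case check therefore collapses to a check over $2^3 = 8$ selection patterns, with the remaining $2^4 = 16$ choices from $k_2, k_3, k_5, k_6$ irrelevant to Horn-ness.

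I would then walk through the eight sub-cases in order, in each case bounding the number of distinct positive $S$-literals appearing in the clause. Five of the eight are handled immediately. If the $S$-side is not selected from $k_4$, any positive $S$-literal must come from $k_1$ or $k_7$, and these both yield the same literal $P$, so at most one positive $S$-occurrence arises. If the $S$-side is selected from $k_4$ but not from $k_1$ or $k_7$, the clause contains only $Q$ as a positive $S$-literal. The remaining three sub-cases are those in which $Q$ is picked from $k_4$ together with $P$ from $k_1$, from $k_7$, or from both; in those clauses both $P$ and $Q$ literally appear.

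The main obstacle is precisely these three ``mixed'' sub-cases, and resolving them requires a careful reading of Definition \ref{ESOhorn}. The plan is to interpret the phrase ``at most one positive occurrence of any of the second-order predicates $S_i$'' at the level of distinct $S$-atoms rather than at the level of the predicate symbol $S$: since $S(x)$ and $S(y)$ use different argument tuples, they contribute to distinct propositional atoms, and the Horn condition is to be verified on each atom separately. Under this reading, the presence of a single positive $S(x)$ (the literal $P$, possibly drawn from both $k_1$ and $k_7$ but collapsing to one occurrence) together with a single positive $S(y)$ (the literal $Q$) leaves the clause Horn, since each atom occurs positively at most once. Articulating this per-atom reading explicitly—so that the conclusion is that each individual clause is Horn, rather than merely that $\psi$ is equivalent to a Horn conjunction—is the essential step of the proof.
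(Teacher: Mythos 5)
Your reduction of the $128$ cases to the choices made from $k_1$, $k_4$, $k_7$ is sound, and you correctly isolate the same three problematic clauses that the paper does (those containing a positive $P$ together with a positive $Q$ and no negated $S$-literal). The gap is in how you dispose of them. Your proposed ``per-atom'' reading of the Horn condition --- under which a clause containing both $S(x)$ and $S(y)$ positively is still Horn because each \emph{atom} occurs positively only once --- is not the standard definition and is not the one the paper uses. Table \ref{tab:defines} recalls that a Horn clause contains at most one positive literal; $S(x)$ and $S(y)$ are two distinct positive literals, so a clause containing both is simply not Horn. Worse, your reading trivializes the condition entirely: after merging duplicate literals, \emph{every} clause has at most one positive occurrence of each atom, so every formula would count as Horn and the polynomial-time solvability of propositional Horn satisfaction (the whole point of the restriction, via Theorem \ref{thom:decision}) would no longer follow.

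The paper resolves the three mixed clauses by a semantic argument instead: the first-order predicates $A$, $B$, $C$, $D$ are mutually exclusive and collectively exhaustive over the tuples being counted, so $A \vee B \vee C \vee D$ is valid; and in the clauses where $D$ is replaced by $P$ (those drawing $P$ from $k_7$), the disjunction $A \vee B \vee C \vee P$ is still valid because $D$ being the clause type forces $k_7$, hence $P$, to hold. Each of the three offending clauses is therefore a tautology relative to the input structure and can be deleted from $\psi$ without changing its models, leaving a conjunction of genuinely Horn clauses. To repair your proof you would need to replace the reinterpretation of the Horn condition with an argument of this kind showing the three mixed clauses are redundant.
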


\begin{proof}
Note that the literals $A$, $B$, $\cdots$, $E$ are first order (part of
the input), hence these do not affect the Horn condition; only the $P$'s
and $Q$'s and their negations do.

If there is an 8-literal clause in $\psi$ containing literals $P$ and
$\neg P$, it can be set to TRUE. ~Similarly for a clause containing $Q$
and $\neg Q$.

Anyway, we will run into trouble only if we have a clause $\psi_i$ in
$\psi$, that (i) contains literals $P$ and $Q$, and
(ii) contains neither $\neg P$ nor $\neg Q$.
~However, can such a clause evaluate to TRUE and hence can be
``ignored"?  Will such a clause obey the Horn condition?
The answer turns out to be yes.

There are only three ways in which we can come across a ``$P \vee Q$"
within a 8-literal clause of $\psi$:
\begin{itemize}
\item
Pick $P$ from $k_1$, $Q$ from $k_4$, and one of \{$A$, $B$, $\cdots$, $E$\}
from the other clauses, to obtain
$\psi_1 \equiv P \vee A \vee B \vee Q \vee C \vee C \vee D \vee E$.

This clause of $\psi$ contains $A$, $B$, $\cdots$ $E$ --- the five types
of clauses mentioned in (\ref{phiDefine0}), and one of them must occur;
they are mutually disjoint and collectively exhaustive.  So
$A \vee B \vee \cdots \vee E$ is (always) valid.  So $\psi_1$ can be set
to TRUE.

\item
Pick $P$ from $k_7$, $Q$ from $k_4$, and one of 
\{$A$, $B$, $\cdots$, $E$\} from the other clauses, to obtain
$\psi_2 \equiv A \vee A \vee B \vee Q \vee C \vee C \vee P \vee E$.

This clause only contains $A$, $B$, $C$ and $E$, but not $D$. ~However, we
know that $A \vee B \vee C \vee D \vee E$ is valid.  If $A$, $B$, $C$ and $E$
are false, then $D$ will be true (the $OnePos$ predicate) --- this
means, every clause in $\phi$ is false except $k_7$, which implies that
$P$ is true.  Hence $A \vee B \vee C \vee E \vee P$ is valid, which means
$\psi_2$ can be set to TRUE.

\item
Pick $P$ from $k_1$ and $k_7$, $Q$ from $k_4$, and one of \{$A$, $B$,
$\cdots$, $E$\} from the other clauses, to obtain
$\psi_3 \equiv P \vee A \vee B \vee Q \vee C \vee C \vee P \vee E$.
~ Apply the same argument as for $\psi_2$.  This sets $\psi_3$ to TRUE.
\end{itemize}

Hence each of the 256 clauses in $\psi$ is Horn.
\end{proof}

\subsection{The Type of Clause}\label{sec:clauseType}

We need a few more clauses to represent whether a certain
$(x, y)$ combination actually occurs in the MaxHorn2Sat formula, and in
which of the five $\phi_i$ (1 $\le i \le$ 5) varieties it occurs.
Furthermore, only clauses for which ($x < y$) or ($y = NULL$) should be
considered.  We express these as $\tau$:
\begin{equation}
\begin{array}{rcl}
\tau  & \equiv  & \bigwedge_{i=1}^6 ~ \tau_i, ~ \mbox{where}
\\ [1mm]
\tau_1  & \equiv  & (i = 1) \Leftrightarrow FPSN(x,y), \\ [1mm]
\tau_2  & \equiv  & (i = 2) \Leftrightarrow FNSP(x,y), \\ [1mm]
\tau_3  & \equiv  & (i = 3) \Leftrightarrow BothNeg(x,y), \\ [1mm]
\tau_4  & \equiv  & (i = 4) \Leftrightarrow OnePos(x) \land (y = NULL), \\ [1mm]
\tau_5  & \equiv  & (i = 5) \Leftrightarrow OneNeg(x) \land (y = NULL), ~ \mbox{and} \\ [1mm]
\tau_6  & \equiv  & (x < y)  \lor (y = NULL).
\end{array}
\end{equation}
Recall that we require $x < y$, in case $y \neq NULL$.
~But note that as per Assumption \ref{assum:clauseTypes} (Part 6), we
know that for the same $(x,y)$ pair, more than one value for $i$ is a
possibility.
Also, $\tau$ can be easily converted to CNF form.

However, everthing in $\tau$ is first order;
hence their truth values can be evaluated and substituted.
This does not affect the Horn condition.

But does a certain $(x,y,i)$ combination actually occur in the given
MaxHorn2Sat formula?  For instance, does $(x_4, x_7, 2)$ occur? That is,
does the clause $(\neg x_4 \lor x_7)$ occur?  For this,
we need another first-order predicate $\gamma(x,y,i)$; set this to true iff
the combination  $(x,y,i)$ occurs in the given input formula.
Furthermore, since $\gamma$ is first order, it does not affect the Horn
condition.

From all the arguments above including Lemma \ref{lem:128clauses}, we
conclude:

\begin{thom}
Let the structure $\mb{A}$ represent an instance of MaxHorn2Sat
defined in Problem \ref{prob:maxhorn2sat}.
Then the value of an optimal solution to $\mb{A}$ can be represented by
\begin{equation}
\O (\mb{A}) = \max_{S} \vert \{(x,y,i): (\mb{A}, S, x, y, i)
\models  ~ \psi(x, y, S) \land \tau \land \gamma(x,y,i) \}
\vert,
\label{maxDefPoptPB}
\end{equation}
where $\psi(x, y, S)$ is defined in
(\ref{psiDefine}), $x$ ranges over the universe $U$ (explained in Sec.
\ref{sec:sigA}), $y$ ranges over $U \cup \{NULL\}$, and the range for $i$
is $1 \le i \le 5$.
\label{thom:maxhorn2sat}
$\hfill \rule{2.0mm}{2.0mm}$
\end{thom}

Note that $\psi$ above is quantifier free ($\Pi_0$ or $\Sigma_0$ form).
This means

\bf{Corollary to Theorem \ref{thom:maxhorn2sat} and Discussion}:
Since it is known that MaxHorn2Sat is NP-hard, observe that even a
$\Pi_0$ Horn expression does not guarantee polynomial time solvability
for maximisation problems (assuming that \bf{P} $\not=$ \bf{NP}).

In \cite{cjtcs08}, it was shown that the MaxFlow$_{PB}$ problem (the
MaxFlow problem with unit weight edges) cannot be represented in Horn
$\Pi_0$ or Horn $\Sigma_1$ first order form; it needs a Horn
$\Pi_1$ sentence.  The optimal solution to this problem can be obtained
in polynomial time using Maximum Flow algorithms.

Hence it is unexpected that while a polynomially solvable problem,
MaxFlow$_{PB}$, has a Horn $\Pi_1$ lower bound, an NP-hard problem,
MaxHorn2Sat, can be expressed by a quantifier-free Horn sentence.

A similar anamoly was observed by Panconesi and Ranjan (1993)
\cite{pancoRanjan}: While the class MAX NP or MAX $\Sigma_1$ (defined in
Definition \ref{def:maxPi0andMaxNP}) can express NP-hard problems such as
Max3Sat, it is unable to express polynomially solvable problems such as
Maximum Matching.  This suggests that
\begin{conj}
\bf{Quantifier alternation} does not provide a precise characterisation of
computation time.  A hierarchy in quantifier alternation does not
translate to one in computation time.  We need to look at other
characteristics of logical formulae such as the number of variables, or
a combination of these.
\label{conj:hierarchyBreak}
\end{conj}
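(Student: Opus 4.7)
The plan is to formalize the conjecture as a pair of non-correspondence statements: (a) there exist NP-hard optimization problems definable by $\Pi_0$ (quantifier-free) Horn sentences of the form (\ref{maxDef}), and (b) there exist polynomial-time solvable optimization problems that provably require strictly more quantifier alternation (e.g.\ $\Pi_1$ rather than $\Pi_0$ or $\Sigma_1$). Establishing both directions simultaneously refutes any order-preserving translation between the quantifier alternation hierarchy and any hierarchy measured by deterministic computation time, which is precisely what the conjecture asserts.

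For direction (a), Theorem \ref{thom:maxhorn2sat} already supplies the required witness: MaxHorn2Sat is NP-hard (by \cite{jaumard87}), yet its optimal value is captured by the $\Pi_0$ Horn formula $\psi$ of (\ref{psiDefine}). I would accompany this with a short contrapositive remark that if every problem expressible by a $\Pi_0$ Horn sentence were to lie in $\mb{P_{opt}^{pb}}$, then by Theorem \ref{thom:maxhorn2sat} MaxHorn2Sat would lie in $\mb{P_{opt}^{pb}}$, contradicting its NP-hardness (assuming $\mb{P} \neq \mb{NP}$). This firmly places a tractable-looking syntactic class above a genuinely hard problem.

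For direction (b), I would invoke the result of \cite{cjtcs08} showing that MaxFlow$_{PB}$ is \emph{not} definable by any Horn $\Pi_0$ or Horn $\Sigma_1$ sentence, and requires a Horn $\Pi_1$ sentence, even though MaxFlow is solvable in polynomial time. In parallel I would cite the Panconesi--Ranjan observation \cite{pancoRanjan} that Maximum Matching, also in $\mb{P}$, lies outside MAX $\Pi_0$, while Max3Sat (NP-hard) lies inside. Combining these two examples with direction (a) produces, at adjacent levels of the hierarchy, both NP-hard problems at a low level and polynomial-time problems only expressible at a higher level, breaking monotonicity in either direction.

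The hard part will be rigorously defending the lower bounds underlying direction (b): one must rule out exotic signatures or pre-computed first-order predicates that could artificially depress a problem's quantifier complexity. I would therefore fix, at the outset, a canonical signature for each optimization problem (capturing only the natural relational structure of the instance), and then use an Ehrenfeucht--Fra\"{\i}ss\'{e} game or a locality argument (in the style of \cite{cjtcs08}) to show that no $\Pi_0$ or $\Sigma_1$ Horn sentence over that signature can separate yes- from no-instances of MaxFlow$_{PB}$. Once this technical lower bound is secured, the two witness problems --- NP-hard at $\Pi_0$ but tractable only at $\Pi_1$ --- jointly establish the conjecture, and motivate the closing suggestion that parameters such as the number of first-order variables (or a combination of quantifier depth and variable count), rather than quantifier alternation alone, should form the correct axis along which to seek a logical characterization of computation time.
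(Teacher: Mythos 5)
Your proposal follows essentially the same route as the paper, which motivates this (explicitly unproved) conjecture with exactly the witnesses you name: MaxHorn2Sat, NP-hard yet captured by the quantifier-free Horn formula of Theorem \ref{thom:maxhorn2sat}, versus the polynomially solvable MaxFlow$_{PB}$, which \cite{cjtcs08} shows cannot be expressed in Horn $\Pi_0$ or Horn $\Sigma_1$ and needs Horn $\Pi_1$, together with the Panconesi--Ranjan contrast between Max3Sat and Maximum Matching. The paper deliberately leaves the statement as a conjecture rather than formalizing and proving it, so your added work --- fixing canonical signatures and re-establishing the $\Pi_0$/$\Sigma_1$ lower bound, and conditioning direction (a) on $\mb{P} \neq \mb{NP}$ --- is a faithful sharpening of the paper's own informal argument rather than a different approach.
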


This section has further exposed the expressibility differences between
decision problems and optimisation problems.

% There may be a fluffy hierarchy between P and NP.

\section{Expressing optimality conditions with the help of duality}\label{sec:rescue}

From the question of logical expressibility of optimisation problems, we
next move to that of solving optimisation problems using Turing machines.

\bf{Recognizing (Verifying) Optimality}.  In general, the question,
~\it{Given a solution \bf{T} to an instance \bf{A} of an optimisation
problem Q, is it an optimal solution?}~ 
is as hard to answer as determining an optimal solution, necessitating a
$\Sigma_2$ second order\footnote{Defined in Definition
\ref{def:pie1pie2sigma1sigma2}.}
sentence as in (\ref{lazyOpt}) below.
However, under certain conditions, such as when the duality gap is zero,
optimal solutions can be recognised more efficiently, and can be
expressed in existential second order (ESO, or second order $\Sigma_1$)
logic.

\it{Duality Gap} is the difference between the optimal solution values
for the primal and dual problems; these two problems are defined below in
(\ref{eq:problemP1}) and (\ref{eq:problemP2}).
For problems such as LP and MaxFlow-MinCut, the duality gap has been
shown to be zero; that is, they posess the \it{strong duality} property.
However, for other problems such as Integer Programming, there is no
known dual problem that guarantees strong duality; hence expressions
that capture the simultaneous existence of primal and dual optimal
solutions with equal value
(such as (\ref{LPcharac}) and (\ref{MFMCcharac}))
cannot be derived, at least until a dual that guarantees strong duality
is discovered.

The above question can also be phrased as a classical decision problem
(for maximisation):
\it{Given a solution \bf{T} for an instance \bf{A} with solution value
f(\bf{T}), is there another solution \bf{S} such that f(\bf{S}) $>$
f(\bf{T})?}

An optimal solution \bf{T} to an instance \bf{A} of an optimisation
problem $Q$ can easily be represented as the best among all feasible
solutions \bf{S}:
\begin{equation}
\exists \mb{T} \forall \mb{S} ~ \phi(\mb{A}, \mb{T}) \land 
\phi(\mb{A}, \mb{S}) \wedge [f(\mb{A}, \mb{T}) \ge
f(\mb{A}, \mb{S})],
\label{lazyOpt}
\end{equation}
where $\phi$ represents satisfaction of the constraints to \bf{A}, and
$f$ is the objective function referred to, in Definitions
\ref{def:pOptProblem} and \ref{def:npOptProblem}.
The formula $\phi$ captures the constraints, such as
$\mb{g(x)} = \mb{b}$ and $\mb{h(x)} \le \mb{c}$ in (\ref{eq:problemP1})
below.
~ $\mb{g(x)}$ and $\mb{h(x)}$ are functions of $\mb{x} \in \bb{R}^n$.

[Note that the above formula represents an optimal solution to a
maximisation problem; we can write a similar formula for minimisation;
simply change the last condition to
$f(\mb{A}, \mb{T}) \le f(\mb{A}, \mb{S})$.]

Recall that a maximisation problem $P_1$ in the $\bb{R}^n$ Euclidean
space can be represented as follows \cite{bazaraaEtAl}:
\begin{equation}
\begin{array}{lrl}
& \mbox{Maximise} & f_1(\mb{x}):\bb{R}^n \rightarrow \bb{R}, \\ [1mm]
(P_1)~~ & \mbox{subject to} & \mb{g(x)} = \mb{b}, 
        ~~\mb{h(x)} \le \mb{c}, \\ [1mm]
& \mbox{where} & \mb{x} \in \bb{R}^n, 
\mb{b} \in \bb{R}^{m_1} \mbox{ and }
\mb{c} \in \bb{R}^{m_2}.
\end{array}
\label{eq:problemP1}
\end{equation}

For several optimisation problems, an optimal solution can be recognised
when a feasible solution obeys certain \it{optimality} conditions.  In
such cases, it is unnecessary to represent an optimal solution \bf{T} as
in (\ref{lazyOpt}).  The duality concept in optimisation can play an
important role here.

Let $\mb{u} \in \bb{R}^{m_1}$ and $\mb{v} \in \bb{R}^{m_2}$ be two
vectors of variables with $\mb{v} \ge \mb{0}$.
Given a \it{primal} problem $P_1$ as in (\ref{eq:problemP1}), its
Lagrangian dual problem $P_2$ can be represented as (see
\cite{bazaraaEtAl}):
\begin{equation}
\begin{array}{lrl}
& \mbox{Minimise} & \theta(\mb{u}, \mb{v}) \\ [1mm]
(P_2)~~ & \mbox{subject to} & \mb{v} \ge \mb{0}, \\ [1mm]
& \mbox{where} & 
\theta(\mb{u}, \mb{v}) = 
\inf_{\mb{x} \in  \bb{R}^n}
 ~ \{f(\mb{x}) + \sum_{i = 1}^{m_1} u_i g_i (\mb{x}) +
                 \sum_{j = 1}^{m_2} v_j h_j (\mb{x}) \}.
\end{array}
\label{eq:problemP2}
\end{equation}
Furthermore, 
$\ds g_i \mb{(x)} = b_i$ [$\ds h_j \mb{(x)} \le c_j$] is the 
$i^{th}$ equality [$j^{th}$ inequality] constraint respectively.

We have demonstrated $\Sigma_1$ (i.e. with existential quantifier) second
order expressibility using Lagrangian duality in the following sections.
However, other types of duality may be used, such as Fenchel duality or
the geometric duality or the canonical duality, as long as they provide a
zero duality gap, and optimality conditions that can be verified
efficiently (say, in polynomial time).

\subsection{Computational models}\label{sec:compModels}

Turing machine (TM) based computational models for solving an
optimisation problem $Q$ come in two flavours:

\bf{Model 1}. The input consists of a problem instance such as in
(\ref{eq:problemP1}). 
If the instance has a feasible solution, the output is a string
representing an optimal solution; otherwise, the TM crashes (no output).
Corresponding to the class P in the world of decision problems, the
class here is \bf{FP} (\cite{papa}, Page 230).

(However, in the case of decision problems that are in the class NP and
the optimisation problems that are NP-hard, the correspondence between
NP and \bf{FNP} is not exact.)

\bf{Model 2}.
In addition to a problem instance such as in (\ref{eq:problemP1}), the
input consists of a parameter $K$, which is a bound on the optimal
solution value.  The TM is a ``decision" machine, that is, one whose
output is simply a yes or a no; call this machine as $M_1$.
~The method then to solve $Q$, by a Turing machine, say $M_2$, is to do
a binary search on solution values, calling $M_1$ a logarithmic ($\log
U$) number of times, where $U$ is an upper bound on the optimal solution
value.
Thus we make a \it{weakly polynomial}\footnote{For a graph problem, an
algorithm is strongly polynomial if the running time is a polynomial in
the number of vertices and/or edges; it becomes weakly polynomial if the
running time is a polynomial in the logarithm of edge weights.
In Linear Programming, this translates to the number of
variables/constraints versus the data in the coefficient matrix \bf{A}
and the right side vector \bf{b}.
~In the graph problem, the number of vertices/edges represents the
\bf{number} of input parameters, whereas the edge weights represent the
\bf{values} of such parameters.}
number of calls to $M_1$.
Each call to $M_1$ involves answering a question such as:
``Is there a feasible solution \bf{S} satisfying the constraints, such
that the objective function value $f(\mb{A}, \mb{S})$ is greater than or
equal to $K$?", for a maximisation problem.

We make a few assumptions here:
\newline
(a) All feasible solutions have non-negative values;
\newline
(b) Given a solution \bf{x}, $f_1 (\mb{x})$, $\mb{g(x)}$ and $\mb{h(x)}$
can be computed in time polynomial in the size of \bf{x}; and 
\newline
(c) Given an input for an instance of (\ref{eq:problemP1}), which
consists of parameters for the three functions $f_1 (\mb{x})$,
$\mb{g(x)}$ and $\mb{h(x)}$, as well as \bf{b} and \bf{c}, an upper bound
$U$ on the optimal solution value $V$ can be computed within time
polynomial in the size of these input parameters. 
\newline
(In cases where it is not possible to compute $V$ efficiently, we need a
simple upper bound $U$ that can be quickly computed.)

For more details on such TM models, the reader is referred to
Papadimitriou \cite{papa}.

If the problem answered by $M_1$ is in the NP class, then the complexity
of solving $Q$ is in $\ds P^{NP}$, since $M_2$ makes a
polynomial number of calls to the oracle $M_1$ (and $M_1$ solves a
problem in NP).

Similarly, if the problem answered by $M_1$ is in the class P, then 
the complexity of solving $Q$ is in $\ds P^{P}$, which is simply P
(although strictly speaking, this is weakly polynomial due to the $\log
V$ number of calls).

The method used in Model 2, binary search, has been recognised/adopted
for solving optimisation problems since the discovery of the class NP.
~It involves making a polynomial number of calls to a ``decision TM" (a
TM that solves decision problems).

However, we show in this section that for pairs of problems with a
duality gap of zero, a single call to a decision TM is sufficient.
If the machine answers yes, then the primal and the dual problems have
optimal solutions; otherwise, neither problem has an optimal
solution (at least one of the problems will be infeasible, and one of
them may have an unbounded optimal solution).
This is demonstrated by second order $\Sigma_1$ sentences such as
(\ref{LPcharac}) and (\ref{MFMCcharac}), which implies, as per Fagin's
result below, that such a machine produces a yes/no answer in NP time.
\begin{thom}
\cite{fagin}
A decision problem can be logically expressed in ESO if and only if
it is in NP.
\label{thom:fagin}
\end{thom}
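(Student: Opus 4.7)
The plan is to prove the two directions of Fagin's theorem separately, since the statement is a biconditional characterizing $\mb{NP}$ by ESO-definability. For the easy direction (ESO $\subseteq \mb{NP}$), given an ESO sentence $\exists \mb{S}\, \psi(\mb{S})$ over a structure $\mb{A}$ with universe of size $n$, a nondeterministic machine first guesses the interpretations of the second-order predicates $\mb{S} = (S_1, \ldots, S_p)$. Each $S_i$ of arity $r_i$ has at most $n^{r_i}$ tuples, so the entire guess has size polynomial in $n$. Once $\mb{S}$ is fixed, the first-order body $\psi$ is evaluated on the expanded structure $(\mb{A}, \mb{S})$ in deterministic polynomial time by recursion on subformulas, since each first-order quantifier ranges over the $n$-element universe.

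For the harder direction ($\mb{NP} \subseteq$ ESO), I would directly encode accepting computations of a polynomial-time nondeterministic Turing machine $M$ as second-order relations. If $M$ halts within time $n^k$ on inputs of size $n$, introduce second-order predicates $T_a(\bar{t}, \bar{i})$ (cell $\bar{i}$ holds symbol $a$ at time $\bar{t}$), $H(\bar{t}, \bar{i})$ (head position), and $Q_q(\bar{t})$ (current state), where time and cell indices are $k$-tuples drawn from the universe. The FO body $\psi$ must then assert: (i) the initial configuration correctly encodes $\mb{A}$ on the input tape; (ii) consecutive configurations obey $M$'s transition relation, which is a finite case analysis local to each head cell and its two neighbors; and (iii) some reachable configuration is in an accepting state.

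The main obstacle is supplying enough arithmetic in FO to index $n^k$ time steps and tape cells over a bare relational universe. The standard remedy is to additionally existentially quantify a binary relation $<$ together with FO axioms forcing it to be a linear order of the universe; the induced lexicographic order on $k$-tuples then indexes up to $n^k$ positions, and successor, zero and maximum all become FO-definable. With an order in hand, the transition check reduces to a quantifier-free disjunction over the finite state/symbol tables of $M$, and every second-order quantifier introduced (including the order $<$ and the configuration predicates) is existential. I would close by observing that the construction uses only $\exists \mb{S}$-quantifiers, establishing $\mb{NP} \subseteq$ ESO and hence the biconditional.
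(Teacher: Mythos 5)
The paper offers no proof of this statement: it is quoted verbatim as Fagin's theorem with a citation to \cite{fagin}, and is used only as a black box (e.g.\ to conclude that recognizing optimality via (\ref{LPcharac}) is in NP). So there is no in-paper argument to compare yours against; what you have written is the standard textbook proof, and its outline is sound. The easy direction is exactly right: the guessed relations have size $O(\sum_i n^{r_i})$ and model-checking the fixed first-order body on $(\mb{A},\mb{S})$ is polynomial in $n$. For the hard direction, the tableau encoding with configuration predicates $T_a$, $H$, $Q_q$ over $k$-tuples, plus an existentially quantified linear order $\mathord{<}$ axiomatized in first-order logic to supply $n^k$ indices, is precisely Fagin's construction. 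Two details deserve more care than your sketch gives them. First, a nondeterministic machine operates on a \emph{string encoding} of $\mb{A}$, and that encoding depends on the guessed order; you must argue that acceptance does not depend on which linear order is guessed, which follows because the class of structures being defined is closed under isomorphism (so $M$ accepts the encoding under some order iff under every order). Second, clause (ii) should be stated so that the tableau is required to be consistent with \emph{some} transition of $M$ at each step --- a disjunction over the finite transition table --- since $M$ is nondeterministic; your phrase ``obey $M$'s transition relation'' covers this but is worth making explicit. With those two points spelled out, the argument is complete and is the same one found in standard references such as \cite{EF}.
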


The following theorem is the deterministic counterpart of Fagin's
result. 
It characterises {P} as the class of decision problems definable by ESO
universal Horn formulae.

\begin{thom}
(Gr\"{a}del \cite{gradel91})
For any ESO Horn expression as defined in Definition \ref{def:ESOhorn}, the
corresponding decision problem is a member of {P}.

The converse is also true --- if a problem $\mathcal{P}$ is a member of {P},
then it can be expressed in ESO Horn form  --- but only if a successor
relation (defined in Def. \ref{def:successor}) is allowed to be included
in the vocabulary of the first-order formula $\psi$ (of Def.
\ref{def:ESOhorn}). 
% $\hfill \rule{2.0mm}{2.0mm}$
\label{thom:decision}
\end{thom}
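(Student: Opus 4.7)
The plan is to prove the two directions of Theorem~\ref{thom:decision} separately. The forward direction, that any ESO Horn expression defines a decision problem in \textbf{P}, rests on the classical linear-time algorithm for propositional Horn satisfiability. The converse, that every problem in \textbf{P} admits an ESO Horn definition once a successor relation is available, is proved by encoding the computation of a polynomial-time deterministic Turing machine as a Horn formula.

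For the forward direction, fix an input structure $\mb{A}$ of universe size $n$ and the expression $\exists \mb{S} ~ \forall \mb{x} ~ \eta(\mb{x}, \mb{S})$. Replacing the universal block by an explicit conjunction over the $n^k$ tuples from the universe yields polynomially many ground clauses. Each first-order atom in these clauses has its truth value determined by $\mb{A}$, so it can be evaluated and substituted out, leaving propositional Horn clauses whose only atoms are of the form $S_i(\bar{a})$ for tuples $\bar{a}$ from the universe. There are only polynomially many such atoms, so the resulting propositional Horn formula has polynomial size, and the existence of a satisfying assignment --- equivalently, of a witness $\mb{S}$ for the original sentence --- is decided in polynomial time by the standard unit-propagation / minimal-model construction.

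For the converse, I would simulate a deterministic Turing machine $M$ that decides the problem in time $n^k$ on an input structure of size $n$. The plan is to introduce second-order predicates $T_s(t,p)$ (``tape cell $p$ holds symbol $s$ at time $t$''), $H(t,p)$ (``head at position $p$ at time $t$''), $Q_q(t)$ (``state $q$ at time $t$''), and a nullary accept atom $\mr{Acc}$. Using the successor relation to linearly order both time steps and tape positions, I would write Horn clauses for the initial configuration, the transition function, the frame conditions (cells away from the head retain their contents), and the final acceptance condition. Because $M$ is deterministic, each transition takes the form ``a conjunction of premise atoms forces a single consequent atom'', which is exactly the Horn pattern. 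The full sentence $\exists \mb{S} ~ \forall t ~ \forall p ~ (\mathrm{axioms}) \wedge \mr{Acc}$ then holds on $\mb{A}$ precisely when $M$ accepts its encoding.

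The main obstacle is the converse direction, and within it the frame axiom for unaffected tape cells. Naively it reads ``if the head is not at $p$, then cell $p$ retains its contents'', which looks like a negation of the second-order atom $H$. The fix --- and the essential use of the successor relation --- is to write the preservation as $T_s(t,p) \wedge H(t,p') \wedge (p \ne p') \rightarrow T_s(t+1, p)$, where $p \ne p'$ is a first-order atom that does not affect the Horn status with respect to $\mb{S}$. A secondary subtlety is ensuring that the initial configuration, which mentions the entire input, is written with the help of the successor to designate cell $0$, cell $1$, etc., without introducing disjunctions over second-order atoms. Once every axiom (initial configuration, transitions, frame, acceptance) has been checked to fit this Horn template, the two directions combine to yield the theorem.
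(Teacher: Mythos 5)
The paper does not actually prove Theorem \ref{thom:decision}; it is quoted from Gr\"{a}del \cite{gradel91}, and the only justification in the text is the one-sentence remark following the statement, which addresses the forward direction via reduction to propositional Horn formulas. Your first paragraph is precisely that argument, carried out correctly: grounding the universal block over the $n^k$ tuples gives polynomially many clauses, evaluating and substituting the first-order atoms leaves a polynomial-size propositional Horn formula over the atoms $S_i(\bar{a})$, and its satisfiability is decided in linear time by unit propagation. So on the forward direction you agree with, and properly complete, what the paper says.

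The converse is where there is a genuine gap. As written, your encoding of acceptance fails: in $\exists \mb{S} ~ \forall t ~ \forall p ~ (\mathrm{axioms}) \wedge \mr{Acc}$, the nullary atom $\mr{Acc}$ is itself one of the existentially quantified second-order atoms, and a unit positive clause is Horn; hence the sentence is satisfied by simply setting $\mr{Acc}$ (and, say, every $T_s$, $H$, $Q_q$) to true, regardless of whether $M$ accepts. The implicational axioms only force atoms to become true when their premises hold; they never prevent additional atoms from being true, so nothing ties the truth of $\mr{Acc}$ to the actual computation. The standard repair exploits the minimal-model property of Horn formulas in the opposite direction: add a purely negative (goal) clause such as $Q_{\mr{rej}}(t_{\mathrm{final}}) \rightarrow \mathrm{false}$, i.e.\ $\neg Q_{\mr{rej}}(t_{\mathrm{final}})$, which is Horn since it has no positive literal. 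The formula is then satisfiable iff the minimal model --- which records exactly the run of the deterministic machine --- never reaches the rejecting state, iff $M$ accepts. Your handling of the frame axioms via the first-order inequality $p \ne p'$ is correct, since the Horn restriction applies only to the second-order predicates. It is also worth noting that Gr\"{a}del's own proof of the converse proceeds via the Immerman--Vardi characterization of P by least-fixed-point logic on ordered structures rather than by direct machine simulation; a direct simulation is a legitimate alternative route, but only once the acceptance clause is repaired as above.
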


\begin{rem}
The polynomial time computability in the first part of Theorem 
\ref{thom:decision} is due to the fact that the first order part of
formulae  representing decision problems can be reduced to propositional
Horn formulae, which can be solved in time linear in the number of
predicates which are second order and unknown (that is, not a part of the
input)\footnote{The first order predicates are part of the input, hence
their truth values are known and can be substituted.}.
\label{rem:hornPolySoln}
\end{rem}

\subsection{Linear programming}\label{sec:LP}

Linear Programming (LP) is known to be in the class P
\cite{fangPuthenpura}, whereas Integer Programming (IP) is NP-hard
\cite{gj}.

In the case of Linear Programming (LP), using Lagrangian
duality, the primal and dual
problems $P_3$ and $P_4$ respectively, can be stated as follows:
\begin{equation}
\begin{array}{crlccrl}
 (P_3) & \mbox{Maximise} & f_1(\mb{x}) = \mb{c}^T \mb{x}, 
 & & (P_4) & \mbox{Minimise} & f_2(\mb{y}) = \mb{b}^T \mb{y},
\\ [1mm]
& \mbox{subject to} & \mb{Ax} \le \mb{b}, ~ \mb{x} \ge \mb{0}, 
 & & & \mbox{subject to} & \mb{A}^T \mb{y} \ge \mb{c}, ~ \mb{y} \ge \mb{0},
\\ [1mm]
& \mbox{where} & \mb{x}, \mb{c} \in \bb{R}^n, &  &  &
 \mbox{and} 
&  \mb{y}, \mb{b} \in \bb{R}^{m},
\end{array}
\label{LPdual}
\end{equation}
after the usual process \cite{hadley} of converting unrestricted
variables (if any) to non-negative variables, and equality constraints
(if any) to inequality constraints, in the primal problem.
Here, $y_i$ ($x_j$) is the $i^{th}$ dual ($j^{th}$ primal) variable
corresponding to the $i^{th}$ primal ($j^{th}$ dual) constraint. 
When the primal and dual problems have feasible solutions, then
they both have optimal solutions 
$\ds \mb{x^*} = (x_1^*, x_2^*, \cdots, x_n^*)$ and 
$\ds \mb{y^*} = (y_1^*, y_2^*, \cdots, y_m^*)$
such that
the two objective functions are equal:
$\ds \mb{c}^T \mb{x^*} = \mb{b}^T \mb{y^*}$.
(Almost every book on LP should explain this result.  See for example,
\cite{hadley}.)

For LP's, the \it{complementary slackness} conditions below are known to
be necessary and sufficient conditions for the existence of an optimal
primal solution \bf{S} and an optimal dual solution \bf{T}:
\begin{eqnarray}
y_i^{\ast} (b_i - A_i \mb{x^{\ast}})  =  0, & y_i^{\ast} \ge 0, &
b_i - A_i \mb{x^{\ast}} \ge 0, ~~ i \in \{1, 2, \cdots, m\} 
\label{compSlack1} \\ [1mm]
x_j^{\ast} (c_j - A_j^T \mb{y^{\ast}})  = 0, & x_j^{\ast} \ge 0, &
c_j - A_j^T \mb{y^{\ast}} \ge 0, ~~ j \in \{1, 2, \cdots, n\} 
\label{compSlack2}
\end{eqnarray}
where $A_i$ is the $i^{th}$ row of \bf{A},
$A_j^T$ is the $j^{th}$ column of \bf{A},
$(b_i - A_i \mb{x^{\ast}})  =  0$ is derived from the $i^{th}$ primal
constraint, and $(c_j - A_j^T \mb{y^{\ast}})  = 0$ is derived from the
$j^{th}$ dual constraint.

Thus the existence of \bf{S} and \bf{T} can be expressed as
\begin{equation}
\exists \mb{S} \exists \mb{T} 
~ [\forall i ~ \psi_1(i)] \wedge [\forall j ~ \psi_2(j)] \wedge 
\phi_p (\mb{S}) \wedge \phi_d (\mb{T}),
\label{LPcharac}
\end{equation}
where $\psi_1(i)$ [$\psi_2(j)$] logically captures the $i^{th}$
[$j^{th}$] constraint in (\ref{compSlack1}) [(\ref{compSlack2})]
respectively.
Also, $\phi_p$ and $\phi_d$ model the primal and dual constraints in
(\ref{LPdual}) respectively.

We are not concerned about the first order part of the above
expression, 
$\ds [\forall i ~ \psi_1(i)] \wedge [\forall j ~ \psi_2(j)] \wedge 
\phi_p \wedge \phi_d$.
What is of interest to us is that the existence of optimal solutions for
the primal and dual problems can be expressed in ESO, existential second
order logic; 
a $\Sigma_2$ second order sentence as in (\ref{lazyOpt}) 
is unnecessary.

Note that (\ref{LPcharac}) returns neither an optimal cost nor an optimal
solution; this is consistent with Theorem \ref{thom:decision}.
Providing a framework to compute these entities is \it{not} our concern
at this juncture.

\begin{rem}
Applying Theorem \ref{thom:fagin},
it follows that recognition of an optimal solution, for certain problems
that obey strong duality (such as LP), is in the computational class NP.

(One could argue that the existence of a feasible
solution\footnote{A word of caution --- \bf{Feasible solutions}, a
difference in terminology: Fagin and Gr\"{a}del \cite{gradel91} have
syntactically characterised \it{feasible} solutions for classes NP and P
respectively.  However the ``feasibility" captured by an ESO expression,
as described by Fagin and Gr\"{a}del, also includes an upper (lower)
bound on the objective function of a minimisation (maximisation)
problem, such as $f_1(\mb{x}) \ge K$ where $K$ is a constant --- not
just satisfaction of the constraints such as \bf{Ax} $\le$ \bf{b},
\bf{x} $\ge$ \bf{0} in (\ref{LPdual}).
In this paper, we differ from this view; when we talk about feasibility,
we only refer to satisfaction of constraints such as \bf{Ax} $\le$
\bf{b}, \bf{x} $\ge$ \bf{0}.}
for an optimisation problem, satisfying constraints such as $\mb{Ax} \le
\mb{b}, ~ \mb{x} \ge \mb{0}$, implies the existence of an optimal
solution.)
\end{rem}

\subsection{Polynomially solvable problems}\label{sec:polytime}
But what if the primal and dual problems are polynomially solvable?
Can this be reflected in expressions such as (\ref{LPcharac})?
The answer turns out to be yes --- well, at least for Linear
Programming.
Recall from Theorem (\ref{thom:decision}) that to express polynomial
solvability, the first order part of (\ref{LPcharac}) needs to be a
universal Horn formula, when the underlying input structure has a
built-in successor relation.

The theory of \it{Interior Point} methods \cite{fangPuthenpura} imply the
polynomial solvability of the primal and the dual problems.
From this and Theorem \ref{thom:decision}, it follows that $\phi_p$ and
$\phi_d$ can be expressed as universal Horn formulae, as long as the
underlying structure \bf{B} obeys the conditions of Theorem
\ref{thom:decision} (such as the signature of \bf{B}).

As for the complementary slackness conditions (\ref{compSlack1}) and
(\ref{compSlack2}),  we only need to express 
$\ds y_i^{\ast} (b_i - A_i \mb{x^{\ast}}) =  0$ and 
$\ds x_j^{\ast} (c_j - A_j^T \mb{y^{\ast}}) = 0$, 
since the other conditions have been expressed in 
$\phi_p$ and $\phi_d$.

$\ds y_i^{\ast} (b_i - A_i \mb{x^{\ast}}) =  0$ can be expressed as 
$\ds \psi_1(i) \equiv Y(i) \vee B\_A(i, X)$, where $Y(i)$ is a predicate
which is true iff $\ds y_i^{\ast} = 0$, and 
$\ds B\_A(i, X)$ is a predicate which is true iff 
$\ds b_i - A_i \mb{x^{\ast}} =  0$.
The formula $\psi_1(i)$ is not Horn.  However, since $\ds y_i^{\ast} = 0$
and $\ds b_i - A_i \mb{x^{\ast}} =  0$ do not occur anywhere else in 
(\ref{LPcharac}), we can negate the predicates and modify $\ds \psi_1(i)$.

As in Theorem \ref{thom:hornMax}, the Horn condition in the formula
$\eta$ applies only to the second order predicates in \bf{S} and \bf{T}.
~In this case, it applies to predicates that involve unknowns such as
$x_j$ and $y_i$.

Let $\ds YnotEq0(i)$ be true iff $\ds y_i^{\ast} \not= 0$, and 
$\ds B\_AnotEq0(i, X)$ be true iff $\ds b_i - A_i \mb{x^{\ast}} \not=
0$.
Using these, one can rewrite $\psi_1(i)$ as
\begin{equation}
\psi_1(i) \equiv \neg YnotEq0(i) \vee \neg B\_AnotEq0(i, X),
\end{equation}
which is a Horn formula.

$YnotEq0$ and $B_AnotEq0$ can be constructed in polynomial time.
The predicate $B_AnotEq0$ is more crucial here, since it involves $b_i$,
$A_i$ and $X$.  But checking this is polynomial, since we mainly need to
compute a dot product of the row $A_i$ with $X$.
The logic machinery needed to express the arithmetic can be built into
the first order vocabulary (for example, see the first chapter of
Immerman's book), such that these FO predicates are not affected by the
Horn condition.

Similarly, the formula $\psi_2(j)$ in
(\ref{LPcharac}) can be expressed in Horn form:
\begin{equation}
\psi_2(j) \equiv \neg XnotEq0(j) \vee \neg C\_AnotEq0(j, Y).
\end{equation}
Now that we know that all four subformulae in the first order part of 
(\ref{LPcharac}) can be expressed in universal Horn form, we can
conclude that the formula in (\ref{LPcharac}) fully obeys the conditions
of Theorem \ref{thom:decision}; that is, ESO logic with the first order
part being a universal Horn formulae (that is, the quantifier-free part is
a conjunction of Horn clauses).  Hence we can state that
\begin{thom}
For a pair of primal and dual Linear Programming problems as in 
(\ref{LPdual}), and hence obeying strong duality, when the underlying
input structure has a built-in successor relation, the existence of
optimal solutions for the primal and the dual can be expressed in ESO
logic with the first order part being a universal Horn formula, and the
optimal solutions can be computed in polynomial time (a) using the
technique in Remark \ref{rem:hornPolySoln},
and
(b) by a single call to a decision Turing machine (which returns yes/no
answers).
\end{thom}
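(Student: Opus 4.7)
The plan is to assemble the pieces the preceding subsection has already prepared, and then to invoke Gr\"adel's theorem (Theorem \ref{thom:decision}) to convert ESO Horn expressibility into polynomial-time computability. First I would observe that (\ref{LPcharac}) is of the form $\exists \mb{S} \exists \mb{T}~\chi$, where $\chi$ is the conjunction of four first order subformulae: the two complementary slackness formulae $\forall i~\psi_1(i)$ and $\forall j~\psi_2(j)$, and the primal/dual feasibility formulae $\phi_p(\mb{S})$ and $\phi_d(\mb{T})$. The rewriting done just before the statement shows that $\psi_1$ and $\psi_2$ can be cast as disjunctions of two negative literals involving the second order predicates $YnotEq0$, $B\_AnotEq0$, $XnotEq0$, $C\_AnotEq0$, and are therefore Horn. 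The polynomial solvability hypothesis on $P_3$ and $P_4$ combined with Theorem \ref{thom:decision} gives universal Horn representations of $\phi_p$ and $\phi_d$. Since a conjunction of universal Horn sentences (after pulling universal quantifiers to the front, renaming variables as needed) is again a universal Horn sentence, the whole body $\chi$ is $\Pi_1$ Horn. Hence (\ref{LPcharac}) falls under Definition \ref{ESOhorn}.

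Next I would invoke strong duality to argue that (\ref{LPcharac}) really is a characterization of optimality, not merely of feasibility. The standard LP duality theorem (cited from \cite{hadley}) says that (i) if both $P_3$ and $P_4$ have feasible solutions, then both have optimal solutions, and any feasible primal/dual pair satisfying complementary slackness is optimal; and (ii) conversely, any optimal pair satisfies complementary slackness. Therefore the ESO sentence (\ref{LPcharac}) is satisfied by $(\mb{A})$ if and only if $(\mb{S},\mb{T}) = (\mb{x}^\ast, \mb{y}^\ast)$ is a pair of optimal primal and dual solutions. Applied to Theorem \ref{thom:decision}, this gives part (a): the decision problem ``does the given LP instance admit an optimal primal/dual pair?'' is in $\mb{P}$, and the Horn-SAT-style algorithm that underlies Gr\"adel's theorem actually constructs satisfying assignments to $\mb{S}$ and $\mb{T}$, so the optimal values themselves are read off syntactically in polynomial time.

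For part (b), I would describe the decision TM $M_1$ of Model 2 in Section \ref{sec:compModels} as the machine that, on input $\mb{A}$ and a bound $K$, tests in polynomial time (using the Horn-SAT algorithm) whether (\ref{LPcharac}) augmented with the single extra conjunct $\mb{c}^T\mb{S} \geq K$ is satisfied. Because strong duality pins down the optimal value $K^\ast = \mb{c}^T\mb{x}^\ast = \mb{b}^T\mb{y}^\ast$ uniquely, a single query with $K$ set to (say) $\mb{b}^T\mb{T}$ for the witness $\mb{T}$ produced in part (a) certifies optimality; equivalently, one can just query (\ref{LPcharac}) itself without a bound, and the machine's ``yes'' response together with the witness $(\mb{S},\mb{T})$ it exhibits in the Horn-SAT run yields optimal primal and dual solutions. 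No binary search over $K$ is required, since complementary slackness collapses the search to a single existential query.

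The main obstacle I anticipate is not the syntactic verification but a subtle gap between Gr\"adel's theorem, which is stated for decision problems, and the claim here that the \emph{witnesses} $\mb{S}$ and $\mb{T}$ are themselves produced in polynomial time. To close this gap I would either (i) appeal to the constructive form of Gr\"adel's proof, in which the minimal model of the Horn instance supplies the witness explicitly, or (ii) use self-reducibility of LP: fix coordinates of $\mb{x}^\ast$ one by one, each time invoking the decision TM to test whether (\ref{LPcharac}) remains satisfiable under the fixing. A second, minor obstacle is justifying that arithmetic predicates such as $b_i - A_i \mb{x}^\ast = 0$ can legitimately be treated as second order Horn predicates over a finite structure; this is handled by the standard encoding of rationals with a successor relation, which Theorem \ref{thom:decision} already presupposes in its converse direction.
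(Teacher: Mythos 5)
Your proposal follows essentially the same route as the paper: verify that the complementary-slackness formulae $\psi_1,\psi_2$ become Horn after negating the predicates, obtain universal Horn forms of $\phi_p,\phi_d$ from the polynomial-solvability hypothesis via Theorem \ref{thom:decision}, conclude that (\ref{LPcharac}) is ESO universal Horn, and apply Gr\"adel's theorem to get part (a) and the single decision-TM call of part (b). You are in fact more careful than the paper on two points it leaves implicit --- extracting the witnesses $\mb{S},\mb{T}$ from the Horn-satisfiability procedure rather than merely deciding existence, and encoding the arithmetic predicates over a finite structure --- both of which are genuine gaps in the paper's own exposition that your proposal correctly flags and sketches how to close.
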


But does strong duality imply polynomial time solvability?  This is the
subject of another manuscript \cite{manyem10}.

\subsection{Maxflow mincut}\label{sec:maxFlowMinCut}

The MaxFlow-MinCut Theorem is another example where Lagrangian duality
plays an important role in characterizing optimal solutions.
The MaxFlow and MinCut problems are dual to each other.
At optimality, the values of the two optimal solutions coincide.
An optimal solution to MaxFlow can be syntactically recognised by an
``optimality condition", rather than a comparison of the objective
function value with those of all other feasible solutions. 

The MaxFlow and the MinCut problems have been defined in several books.
For example, see \cite{ahujaMagOrlin} or \cite{hadley}.
The decision versions of both problems are known to be in the complexity
class P.
~We reproduce the definitions below for convenience.
\begin{defn}
The \bf{MaxFlow} problem: 
\newline
{\em Given}.
We are given a network $G = (V, E)$ with 2 special vertices $s, t \in
V$, $E$ is a set of directed edges, and each edge $(i,j) \in E$ has a
capacity $C_{ij} > 0$.
\newline
{\em To Do}.
Determine the maximum amount of flow that can be sent from $s$ to $t$
such that in each edge $(i,j) \in E$, the flow $f(i,j)$ is at most its
capacity $C_{ij}$. That is, $0 \le f(i,j) \le C_{ij}, ~ \forall (i,j) \in E$.
\label{maxFlowDefn}
\end{defn}

An \bf{S-T Cut} is a non-empty subset $U$ of $V$ such that $S \in U$ and
$T \in \bar{U}$, where $\bar{U} = V - U$.
~[If $U$ is used as a second order predicate, 
then $U(i)$ is true for all vertices $i \in U$;
it follows that $U(S)$ is true and $U(T)$ is false;]
~The capacity of the cut, written as $C(U)$, is the sum of the
capacities of all edges $(i,j)$ such that $i \in U$ and $j \in \bar{U}$:
\[
C(U) = \sum_{(i,j) \in E, ~ i \in U, ~ j \in \bar{U}} C_{ij}.
\]

\begin{defn}
The \bf{MinCut} problem:
\newline
{\em Given}.
Same as the MaxFlow problem.
\newline
{\em To Do}.
Of all the $S-T$ cuts in $G$, find a least cut; that is, a cut with the
least capacity.
\label{minCutDefn}
\end{defn}

The \it{optimality condition} for the MaxFlow problem is that
there exists a least $S-T$ cut, $U$, such that 
\begin{itemize}
\item
(forward direction)
For every edge $(i,j)$ in the edge set $E$ such that $i \in U$ and $j
\in \bar{U}$, the flow in $(i,j)$, $f(i,j)$, is equal to its capacity
$C_{ij}$;
\item
(backward direction)
For every edge $(i,j) \in E$ such that $i \in \bar{U}$ and $j
\in U$, $f(i,j) = 0$; and
\item
The maximum flow, that is, the optimal solution value for the MaxFlow
problem, is equal to $C(U)$, the capacity of the cut $U$.
\end{itemize}

This condition can be syntactically characterised as
\begin{equation}
\begin{array}{ll}
\exists {U} \exists {F} ~ \forall i \forall j ~ U(S) \wedge \neg U(T) 
\\ [1mm]
\wedge ~ [E(i,j) \wedge U(i) \wedge \neg U(j) \longrightarrow F(i,j,C_{ij})]
\\ [1mm]
\wedge ~ [E(i,j) \wedge \neg U(i) \wedge U(j) \longrightarrow F(i,j,0)]
~ \wedge ~ \psi, ~ \mbox{where}
\label{MFMCcharac}
\end{array}
\end{equation}
$U$ and $F$ are second order predicates; 
\newline
$E(i,j)$ is a first order relation which is true whenever $(i,j)$ is an
edge in the input graph;
\newline
$U(i)$ is true when $i \in$ vertex set $U$; 
\newline
$F(i,j,v)$ is true when the flow in the edge $(i,j)$ equals $v$; and
\newline
$\psi$ models the flow conservation constraint at all nodes.

The flow conservation constraint is a necessary
constraint for the MaxFlow problem (decision version), which is known to
be polynomially solvable.
Hence as per Theorem \ref{thom:decision}, we can express $\psi$ in ESO
universal Horn logic.

Once more, by exploiting previously proven optimality conditions (the
MaxFlow MinCut theorem in this case), we
have been able to characterise the primal optimal solution $F$ and the dual
optimal solution $U$, in existential second order logic (ESO).

Similarly in Convex Programming, the Karush-Kuhn-Tucker conditions
provide sufficient conditions for the optimality of a feasible solution.

\section{Effect of zero duality gap}

Theorem \ref{thom:decision} provides an expression for the existence of a
feasible solution and polynomial time computation (Remark
\ref{rem:hornPolySoln}). 
What we present here is an improvement on that result, for problems that
obey strong duality.

Observe that expressions such as those in (\ref{LPcharac}) and
(\ref{MFMCcharac}) are possible only if there is no \it{duality
gap}, that is, when the duality gap is zero.
The primal optimality condition implies dual feasibility and vice versa.

% (ARE THE FORMULAE IN 
% (\ref{LPcharac}) and (\ref{MFMCcharac}) Horn?  If so, using the
% ``propositional Horn formula in linear time" property, can you get
% something algorithmic out of it?)

To our knowledge, all known problem-pairs with a zero duality gap, also
known as \it{strong duality}, are polynomially solvable (a well-known
exception is Semidefinite Programming, see \cite{ramana}).
The decision versions of all such optimisation problems can be shown to
be in the complexity class NP $\cap$ CoNP \cite{manyem10}.
The problem class P is closed under complementation; that is, P =CoP
\cite{papa}.

Problems in NP $\cap$ CoNP can be expressed in both ESO and USO
(universal second order logic), since USO precisely characterises
problems in CoNP.

% (How does all this relate to the POLYNOMIAL HIERARCHY?
% Saddle Points? (Linear) complementarity problems?)

\subsection{Problems that obey weak duality but not strong duality}

It is known that universal second order (USO) logic precisely
characterises problems in CoNP \cite{EF99}.
Let $\Phi$ be the formula
\begin{equation}
\Phi \equiv \forall \mb{S} \forall \mb{T} 
~ [g(\mb{T}) < f(\mb{S})] \wedge 
\phi_p (\mb{S}) \wedge \phi_d (\mb{T}),
\label{weakDual}
\end{equation}
where $\phi_p$ ($\phi_d$) model the primal (dual) constraints
respectively;
$f(\mb{S})$ and $g(\mb{T})$ represent the primal and dual objective
function values.
The relation $[g(\mb{T}) < f(\mb{S})]$ for all ($\mb{S}$, $\mb{T}$)
pairs implies that weak duality is obeyed, but not strong duality.

\section{Conclusions}

In this manuscript, we have shown that while all polynomially solvable
decision problems can be expressed as universal ($\Pi_1$) Horn sentences,
if $P \neq NP$, optimisation problems defy such a characterisation, in
the framework defined in expression (\ref{eq:maxDef}).
We showed this by demonstrating that even a $\Pi_0$ Horn formula is
unable to guarantee polynomial time solvability (assuming that P $\not=$
NP).
~In addition, by connecting descriptive complexity with optimisation
duality, we have shown how a certain class of optimisation problems can be
solved by a single call to a decision Turing machine, and presented two
examples.
What we have shown here may just be the beginning; exploring relationships
between duality and finite model theory could provide more interesting
results in complexity theory.

\bf{Acknowledgements}.
I thank James Gate and Iain Stewart at the University of Durham (UK) for
motivating me towards this line of research.
A part of this work was carried out while I was visiting the National
Cheng Kung University (NCKU) in Taiwan on a visiting fellowship;
support from NCKU is gratefully acknowledged.
Research also supported by grants from the National Natural Science
Foundation of China (No. 11071158) and the Key Disciplines of Shanghai
Municipality (No. S30104).

\bigskip

\end{document}